%% ****** Start of file apstemplate.tex ****** %
%%
%%
%%   This file is part of the APS files in the REVTeX 4 distribution.
%%   Version 4.1r of REVTeX, August 2010
%%
%%
%%   Copyright (c) 2001, 2009, 2010 The American Physical Society.
%%
%%   See the REVTeX 4 README file for restrictions and more information.
%%
%
% This is a template for producing manuscripts for use with REVTEX 4.0
% Copy this file to another name and then work on that file.
% That way, you always have this original template file to use.
%
% Group addresses by affiliation; use superscriptaddress for long
% author lists, or if there are many overlapping affiliations.
% For Phys. Rev. appearance, change preprint to twocolumn.
% Choose pra, prb, prc, prd, pre, prl, prstab, prstper, or rmp for journal
%  Add 'draft' option to mark overfull boxes with black boxes
%  Add 'showpacs' option to make PACS codes appear
%  Add 'showkeys' option to make keywords appear
\documentclass[aps,pre,twocolumn]{revtex4-1}
%\documentclass[aps,pre,groupedaddress]{revtex4-1}
%\documentclass[aps,pre,preprint,groupedaddress]{revtex4-1}
%\documentclass[aps,prl,preprint,superscriptaddress]{revtex4-1}
%\documentclass[aps,prl,reprint,groupedaddress]{revtex4-1}

% You should use BibTeX and apsrev.bst for references
% Choosing a journal automatically selects the correct APS
% BibTeX style file (bst file), so only uncomment the line
% below if necessary.
%\bibliographystyle{apsrev4-1}

\usepackage[dvipsnames]{xcolor}
\usepackage[ruled,vlined]{algorithm2e}
\usepackage{adjustbox}
\usepackage{algorithmicx}
\usepackage{amssymb}   % for math
\usepackage{bm}        % for math
\usepackage{booktabs}
\usepackage{dcolumn}   % needed for some tables
\usepackage{enumitem}
\usepackage{graphicx}
\usepackage{graphicx}  % needed for figures
\usepackage{hyperref}
\usepackage{sidecap}
\usepackage{subfigure}
\usepackage{tabularx}
\usepackage{tabularx}
\usepackage{url}

\usepackage{hyperref}
\definecolor{mylinkcolor}{RGB}{0,0,140}
\hypersetup{colorlinks,allcolors=mylinkcolor,citecolor=mylinkcolor}

\usepackage{paralist}

\renewenvironment{enumerate}[1]{\begin{compactenum}#1}{\end{compactenum}}

\hyphenation{elegans}

%\widowpenalty10000
%\clubpenalty10000

\newcommand{\phantomsubfigure}[1]{\subfigure{\label{#1}}}

% Global clustering coefficient
\newcommand{\gccf}[1]{C_{#1}}   % \gccf{3}
% Local clustering coefficient
\newcommand{\lccf}[2]{C_{#1}(#2)}    % \lccf{2}{u}
% Average clustering coefficient (as in Watts & Strogatz)
\newcommand{\accf}[1]{\bar{C}_{#1}}  % \accf{4}
% Clique set
  % \clique{4}
% local clique set
\newcommand{\cliqueL}[2]{K_{#1}(#2)}  % \clique{4}
% Cut
  % \cut{3}{S}
% Neighboring Set
\newcommand{\nbr}[2]{N_{#1}(#2)}  % \nbr{1}{v}
% 1-hop neighborhood

\newcommand{\onehopnou}{\nbr{1}{u}}

\newcommand{\smallworld}{\textnormal{Small-world}}
\newcommand{\ER}{\textnormal{Erd\H{o}s-R\'enyi}}

% tikz drawing
\usepackage{tikz,nicefrac,amsmath,pifont,tkz-graph,makecell,tkz-graph}
\usetikzlibrary{arrows,backgrounds,patterns,matrix,shapes,fit,calc,shadows,plotmarks,arrows.meta,positioning,shapes.geometric}

\usepackage{amsthm}

\newtheorem*{theorem*}{Theorem}
\newtheorem*{definition*}{Definition}

\newtheorem{theorem}{Theorem}

\newtheorem{proposition}[theorem]{Proposition}

\newcommand{\hide}[1]{}

\newcommand{\prob}{\mbox{$\mathbb{P}$}}
\newcommand{\probof}[1]{\prob\left[#1\right]}
\newcommand{\expect}[1]{\mbox{$\mathbb{E}$}[#1]}

\newcommand{\expectover}[2]{\mbox{$\mathbb{E}_{#1}$}\left[#2\right]}
\newcommand{\given}{\;\vert\;}

\newcommand{\welldefnodes}{\widetilde V}

% Get booktabs to work
\AtBeginDocument{
  \heavyrulewidth=.08em
  \lightrulewidth=.05em
  \cmidrulewidth=.03em
  \belowrulesep=.65ex
  \belowbottomsep=0pt
  \aboverulesep=.4ex
  \abovetopsep=0pt
  \cmidrulesep=\doublerulesep
  \cmidrulekern=.5em
  \defaultaddspace=.5em
}

%\setlength{\textfloatsep}{5pt}

% http://tex.stackexchange.com/questions/125066/problem-in-bibliography-when-using-revtex4-1-class-for-publication-in-aip
\makeatletter
\let\Hy@backout\@gobble
\makeatother

\begin{document}

% Use the \preprint command to place your local institutional report
% number in the upper righthand corner of the title page in preprint mode.
% Multiple \preprint commands are allowed.
% Use the 'preprintnumbers' class option to override journal defaults
% to display numbers if necessary
%\preprint{}

%Title of paper
\title{Higher-order clustering in networks}

% repeat the \author .. \affiliation  etc. as needed
% \email, \thanks, \homepage, \altaffiliation all apply to the current
% author. Explanatory text should go in the []'s, actual e-mail
% address or url should go in the {}'s for \email and \homepage.
% Please use the appropriate macro foreach each type of information

% \affiliation command applies to all authors since the last
% \affiliation command. The \affiliation command should follow the
% other information
% \affiliation can be followed by \email, \homepage, \thanks as well.
\author{Hao Yin}
\email[]{yinh@stanford.edu}
%\homepage[]{Your web page}
%\thanks{}
%\altaffiliation{}
\affiliation{Institute for Computational and Mathematical Engineering, Stanford University, Stanford, CA, 94305, USA}

\author{Austin R. Benson}
\email[]{arb@cs.cornell.edu}
%\homepage[]{Your web page}
%\thanks{}
%\altaffiliation{}
\affiliation{Department of Computer Science, Cornell University, Ithaca, NY, 14850, USA}

\author{Jure Leskovec}
%\homepage[]{Your web page}
%\thanks{}
%\altaffiliation{}
\email[]{jure@cs.stanford.edu}
\affiliation{Computer Science Department, Stanford University, Stanford, CA, 94305, USA}

%Collaboration name if desired (requires use of superscriptaddress
%option in \documentclass). \noaffiliation is required (may also be
%used with the \author command).
%\collaboration can be followed by \email, \homepage, \thanks as well.
%\collaboration{}
%\noaffiliation

\date{\today}

%!TEX root = paper-ccfs.tex

\begin{abstract}
A fundamental property of complex networks is the tendency for edges to cluster.
The extent of the clustering is typically quantified by the clustering
coefficient, which is the probability that a length-2 path is closed, i.e.,
induces a triangle in the network.
However, higher-order cliques beyond triangles are crucial to understanding
complex networks, and the clustering behavior with respect to such higher-order
network structures is not well understood.
Here we introduce higher-order clustering coefficients that measure the closure
probability of higher-order network cliques and provide a more comprehensive
view of how the edges of complex networks cluster.
Our higher-order clustering coefficients are a natural generalization of the
traditional clustering coefficient.
We derive several properties about higher-order clustering coefficients and
analyze them under common random graph models.
Finally, we use higher-order clustering coefficients to gain new insights into
the structure of real-world networks from several domains.
\end{abstract}

% insert suggested PACS numbers in braces on next line
\pacs{89.75.Hc, 89.75.Fb, 02.10.Ox}
% insert suggested keywords - APS authors don't need to do this
%\keywords{}

%\maketitle must follow title, authors, abstract, \pacs, and \keywords
\maketitle

%!TEX root = paper-ccfs.tex

\section{Introduction}

Networks are a fundamental tool for understanding and modeling complex physical,
social, informational, and biological systems~\cite{newman2003structure}.
Although such networks are typically sparse, a recurring trait of networks
throughout all of these domains is the tendency of edges to appear in small
clusters or cliques~\cite{rapoport1953spread,watts1998collective}.  In many
cases, such clustering can be explained by local evolutionary processes.  For
example, in social networks, clusters appear due to the formation of triangles
where two individuals who share a common friend are more likely to become
friends themselves, a process known as \emph{triadic
  closure}~\cite{rapoport1953spread,granovetter1973strength}.  Similar triadic
closures occur in other networks: in citation networks, two references appearing
in the same publication are more likely to be on the same topic and hence more
likely to cite each other~\cite{wu2009modeling} and in co-authorship networks,
scientists with a mutual collaborator are more likely to collaborate in the
future~\cite{jin2001structure}.  In other cases, local clustering arises from
highly connected functional units operating within a larger system, e.g.,
metabolic networks are organized by densely connected
modules~\cite{ravasz2003hierarchical}.

The \emph{clustering coefficient} quantifies the extent to which edges of a
network cluster in terms of triangles.  
The clustering coefficient is defined as the fraction of
length-2 paths, or \emph{wedges}, that are closed with a
triangle~\cite{watts1998collective,barrat2000properties}
(Fig.~\ref{fig:ccf_def}, row $\gccf{2}$).  In other words, the clustering
coefficient measures the probability of triadic closure in the network.
However, the clustering coefficient is inherently restrictive as it measures the
closure probability of just one simple structure---the triangle. Moreover, higher-order
structures such as larger cliques are crucial to the structure and function of
complex networks~\cite{benson2016higher,yaverouglu2014revealing,rosvall2014memory}.
For example, 4-cliques reveal community structure in word association and
protein-protein interaction networks~\cite{palla2005uncovering} and cliques of
sizes 5--7 are more frequent than triangles in many real-world networks with
respect to certain null models~\cite{slater2014mid}.
However, the extent of clustering of such higher-order structures has not been
well understood nor quantified.

\begin{figure}[tb]
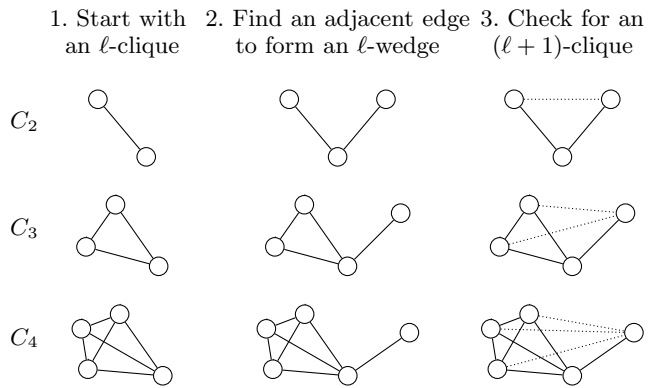

  \begin{tabular}{l c c c}
  & 1.~Start with~ & 2.~Find an adjacent edge & 3.~Check for an \\
  & an $\ell$-clique  & to form an $\ell$-wedge & $(\ell+1)$-clique \\  \\
  $\gccf{2}$ &  \begin{tikzpicture}[baseline=(current bounding box.center)] \input{intro2_part1} \end{tikzpicture} & \begin{tikzpicture}[baseline=(current bounding box.center)] \input{intro2_part1}\input{intro2_part2} \end{tikzpicture} & \begin{tikzpicture}[baseline=(current bounding box.center)] \input{intro2_part1}\input{intro2_part2}\foreach \w in {w1}{ \draw[densely dotted] (\w) -- (v); } \end{tikzpicture} \\  \\
  $\gccf{3}$ &  \begin{tikzpicture}[baseline=(current bounding box.center)] \input{intro3_part1} \end{tikzpicture} & \begin{tikzpicture}[baseline=(current bounding box.center)] \input{intro3_part1}\input{intro3_part2} \end{tikzpicture} & \begin{tikzpicture}[baseline=(current bounding box.center)] \input{intro3_part1}\input{intro3_part2}\foreach \w in {w1,w2}{ \draw[densely dotted] (\w) -- (v); } \end{tikzpicture} \\  \\
  $\gccf{4}$ &  \begin{tikzpicture}[baseline=(current bounding box.center)] \input{intro4_part1} \end{tikzpicture} & \begin{tikzpicture}[baseline=(current bounding box.center)] \input{intro4_part1}\input{intro4_part2} \end{tikzpicture} & \begin{tikzpicture}[baseline=(current bounding box.center)] \input{intro4_part1}\input{intro4_part2}\foreach \w in {w1,w2,w3}{\draw[densely dotted] (\w) -- (v);} \end{tikzpicture} \\        
  \end{tabular}
\caption{Overview of higher-order clustering coefficients as clique
  expansion probabilities.  The $\ell$th-order clustering coefficient
  $\gccf{\ell}$ measures the probability that an $\ell$-clique and an adjacent
  edge, i.e., an $\ell$-wedge, is closed, meaning that the $\ell - 1$ possible
  edges between the $\ell$-clique and the outside node in the adjacent edge
  exist to form an $(\ell + 1)$-clique.}  \label{fig:ccf_def}
\end{figure}  

Here, we provide a framework to quantify higher-order clustering in networks by
measuring the normalized frequency at which higher-order cliques are closed,
which we call \emph{higher-order clustering coefficients}.
We derive our higher-order clustering coefficients by extending a novel
interpretation of the classical clustering coefficient as a form
of clique expansion (Fig.~\ref{fig:ccf_def}).
We then derive several properties about higher-order clustering coefficients
and analyze them under the $G_{n,p}$ and small-world null models.

Using our theoretical analysis as a guide, we analyze the higher-order clustering
behavior of real-world networks from a variety of domains.
We find that each domain of networks has its own higher-order clustering pattern,
which the traditional clustering coefficient does not show on its own.
Conventional wisdom in network science posits that practically all real-world
networks exhibit clustering;
however, we find that not all networks exhibit higher-order clustering.
More specifically, once we control for the clustering as measured by the
classical clustering coefficient, some networks do not show significant
clustering in terms of higher-order cliques.
In addition to the theoretical  properties and empirical findings
exhibited in this paper, our related work also demonstrates a connection
between higher-order clustering and community detection~\cite{yin2017local}.

%!TEX root = paper-ccfs.tex

\section{Derivation of higher-order clustering coefficients}

In this section, we derive our higher-order clustering coefficients and some of their
basic properties.
We first present an alternative interpretation of the classical clustering coefficient
and then show how this novel interpretation seamlessly generalizes to arrive at our 
definition of higher-order clustering coefficients.
We then provide some probabilistic interpretations of higher-order clustering coefficients that will be useful
for our subsequent analysis.

\subsection{Alternative interpretation of the classical clustering coefficient}

Here we give an alternative interpretation of the clustering coefficient that
will later allow us to generalize it and quantify clustering of higher-order
network structures (this interpretation is summarized in Fig.~\ref{fig:ccf_def}). 
Our interpretation is based on a
notion of clique expansion. First, we consider a $2$-clique $K$ in a graph $G$
(that is, a single edge $K$; see Fig.~\ref{fig:ccf_def}, row $C_2$, column 1).
Next, we \emph{expand} the clique $K$ by considering any edge $e$ adjacent to
$K$, i.e., $e$ and $K$ share exactly one node (Fig.~\ref{fig:ccf_def}, row
$C_2$, column 2).  
This expanded subgraph forms a wedge, i.e., a length-$2$ path. 
The classical global clustering coefficient $\gccf{}$ of
$G$ (sometimes called the
transitivity of $G$~\cite{boccaletti2006complex}) is then defined as the
fraction of wedges that are \emph{closed}, meaning that the $2$-clique and
adjacent edge induce a $(2 + 1)$-clique, or a triangle (Fig.~\ref{fig:ccf_def},
row $C_2$, column 3)~\cite{barrat2000properties,luce1949method}.
The novelty of our interpretation of the clustering coefficient is considering it
as a form of clique expansion, rather than as the closure of a length-$2$ path,
which is key to our generalizations in the next section.

Formally, the classical global clustering coefficient is
\begin{equation}\label{eq:global_ccf}
  \gccf{} = \frac{6 \lvert K_3 \rvert}{\lvert W \rvert},
\end{equation}
where $K_3$ is the set of $3$-cliques (triangles), $W$ is the set of wedges, and
the coefficient $6$ comes from the fact that each $3$-clique closes 6 
wedges---the 6 ordered pairs of edges in the triangle.

We can also reinterpret the local clustering coefficient~\cite{watts1998collective}
in this way. In this case, each wedge again consists of a $2$-clique and
adjacent edge (Fig.~\ref{fig:ccf_def}, row $C_2$, column 2), and we call the
unique node in the intersection of the $2$-clique and adjacent edge
the \emph{center} of the wedge.  The \emph{local clustering clustering
coefficient} of a node $u$ is the fraction of wedges
centered at $u$ that are closed:
\begin{equation}\label{eq:local_ccf}
\lccf{}{u} = 
\dfrac{2\lvert K_3(u) \rvert}{\lvert W(u) \rvert},
\end{equation}
where $K_3(u)$ is the set of $3$-cliques containing $u$ and $W(u)$ is the set of
wedges with center $u$ (if $\lvert W(u) \rvert = 0$, we say that $\lccf{}{u}$ is
undefined).  The \emph{average clustering coefficient} $\accf{}$ is the mean of
the local clustering coefficients,
\begin{equation}\label{eq:avg_ccf}
  \accf{} = \frac{1}{\lvert \welldefnodes \rvert}\sum_{u \in \welldefnodes} \lccf{}{u},
\end{equation}
where $\welldefnodes$ is the set of nodes in the network where the local
clustering coefficient is defined.

\subsection{Generalizing to higher-order clustering coefficients}\label{sec:generalization}

Our alternative interpretation of the clustering coefficient, described above as
a form of clique expansion, leads to a natural generalization to higher-order
cliques. Instead of expanding $2$-cliques to $3$-cliques, we expand
$\ell$-cliques to $(\ell + 1)$-cliques (Fig.~\ref{fig:ccf_def}, rows $C_3$ and
$C_4$). Formally, we define an $\ell$-wedge to consist of an $\ell$-clique and
an adjacent edge for $\ell \ge 2$. Then we define the global $\ell$th-order clustering
coefficient $\gccf{\ell}$ as the fraction of $\ell$-wedges that are closed,
meaning that they induce an $(\ell + 1)$-clique in the network.  We can write
this as
\begin{equation}\label{eq:global_ccf_l}
  \gccf{\ell} = \frac{(\ell^2 + \ell)\lvert K_{\ell + 1} \rvert}{\lvert W_{\ell} \rvert},
\end{equation}
where $K_{\ell + 1}$ is the set of $(\ell + 1)$-cliques, and $W_{\ell}$ is the
set of $\ell$-wedges.  The coefficient $\ell^2 + \ell$ comes from the fact that
each $(\ell + 1)$-clique closes that many wedges: each $(\ell+1)$-clique
contains $\ell + 1$ $\ell$-cliques, and each $\ell$-clique contains $\ell$ nodes
which may serve as the center of an $\ell$-wedge. Note that the classical
definition of the global clustering coefficient given in Eq.~\ref{eq:global_ccf}
is equivalent to the definition in Eq.~\ref{eq:global_ccf_l} when $\ell = 2$.

We also define higher-order local clustering coefficients:
\begin{equation} \label{eq:def_lccf}
  \lccf{\ell}{u} = \frac{\ell \lvert K_{\ell + 1}(u) \rvert}{\lvert W_{\ell}(u) \rvert},
\end{equation}
where $K_{\ell + 1}(u)$ is the set of $(\ell + 1)$-cliques containing node $u$,
$W_{\ell}(u)$ is the set of $\ell$-wedges with center $u$ (where the center
is the unique node in the intersection of the $\ell$-clique and adjacent edge comprising the wedge; see Fig.~\ref{fig:ccf_def}), 
and the coefficient $\ell$
comes from the fact that each $(\ell + 1)$-clique
containing $u$ closes that many $\ell$-wedges in $W_{\ell}(u)$.
The $\ell$th-order clustering coefficient of a node is defined for any node that
is the center of at least one $\ell$-wedge, and the average $\ell$th-order
clustering coefficient is the mean of the local clustering coefficients:
\begin{equation}\label{eq:def_accf}
  \accf{\ell} = \frac{1}{\lvert \welldefnodes_\ell \rvert}\sum_{u \in \welldefnodes_\ell} \lccf{\ell}{u},
\end{equation}
where $\welldefnodes_{\ell}$ is the set of nodes that are the centers of at least
one $\ell$-wedge.

To understand how to compute higher-order clustering coefficients, we substitute
the following useful identity
\begin{equation}\label{eq:wedge_identity}
  \lvert W_{\ell}(u) \rvert = \lvert K_{\ell}(u) \rvert \cdot (d_u - \ell + 1),
\end{equation}
where $d_u$ is the degree of node $u$, into Eq.~\ref{eq:def_lccf} to get
\begin{equation}   \label{eq:deriv_lccf}
  \lccf{\ell}{u}
  = \frac{\ell \cdot \lvert K_{\ell+1}(u) \rvert}{(d_u - \ell + 1) \cdot \lvert K_{\ell}(u) \rvert }.
\end{equation}
From Eq.~\ref{eq:deriv_lccf}, it is easy to see that we can compute all local
$\ell$th-order clustering coefficients by enumerating all $(\ell + 1)$-cliques
and $\ell$-cliques in the graph.  The computational complexity of the algorithm
is thus bounded by the time to enumerate $(\ell + 1)$-cliques and
$\ell$-cliques. Using the Chiba and Nishizeki algorithm~\cite{chiba1985arboricity},
the complexity is $O(\ell a^{\ell-2}m)$, where $a$ is the arboricity of the graph,
and $m$ is the number of edges.
The arboricity $a$ may be as large as $\sqrt{m}$, so this algorithm
is only guaranteed to take polynomial time if $\ell$ is a constant.
In general, determining if there exists a single clique with at least $\ell$
nodes is NP-complete~\cite{karp1972reducibility}.

For the global clustering coefficient, note that
\begin{equation}
\lvert W_{\ell} \rvert = \sum_{u \in V} \lvert W_\ell(u) \rvert.
\end{equation}
Thus, it suffices to enumerate $\ell$-cliques (to compute $\lvert W_{\ell} \rvert$ using Eq.~\ref{eq:wedge_identity})
and to count the total number of $\ell$-cliques.  In practice, we use
the Chiba and Nishizeki to enumerate cliques and simultaneously compute
$\gccf{\ell}$ and $\lccf{\ell}{u}$ for all nodes $u$.
This suffices for our clustering analysis with $\ell = 2, 3, 4$ 
on networks with over a hundred million edges in Section~\ref{sec:empirical}.

\subsection{Probabilistic interpretations of higher-order clustering coefficients}

To facilitate understanding of higher-order clustering coefficients and to
aid our analysis in Section~\ref{sec:theoretical}, we
present a few probabilistic interpretations of the quantities.  First, we can
interpret $\lccf{\ell}{u}$ as the probability that a wedge $w$ chosen uniformly
at random from all wedges centered at $u$ is closed:
\begin{equation} \label{eq:prob_interp1}
  \lccf{\ell}{u} = \probof{w \in K_{\ell + 1}(u)}.
\end{equation}
The variant of this interpretation for the classical clustering case of $\ell=2$
has been useful for graph algorithm development~\cite{seshadhri2013triadic}.

For the next probabilistic interpretation, it is useful to analyze the structure of the 1-hop
neighborhood graph $\onehopnou$ of a given node $u$ (not containing node $u$).
The vertex set of $\onehopnou$ is the set of
all nodes adjacent to $u$, and the edge set consists of all edges between neighbors
of $u$, i.e., $ \{ (v, w) \;\vert\; (u, v), (u, w), (v, w) \in E \}$, where $E$ is the edge set of the graph.

Any $\ell$-clique in $G$ containing node $u$ corresponds to a unique
$(\ell - 1)$-clique in $\onehopnou$, and specifically for $\ell = 2$, any edge $(u, v)$
corresponds to a node $v$ in $\onehopnou$.  Therefore, each $\ell$-wedge
centered at $u$ corresponds to an $(\ell-1)$-clique $K$ and one of the
$d_u - \ell + 1$ nodes outside $K$ (i.e., in $\onehopnou \backslash K$).
Thus, Eq.~\ref{eq:deriv_lccf} can be re-written as
\begin{equation}\label{eq:lccf_nbrs}
\frac{\ell \cdot \lvert K_{\ell}(\onehopnou) \rvert}{(d_u - \ell + 1) \cdot \lvert K_{\ell-1}(\onehopnou) \rvert },
\end{equation}
where $K_{k}(\onehopnou)$ denotes the number of $k$-cliques in $\onehopnou$.

If we uniformly at random select an $(\ell - 1)$-clique $K$ from $\onehopnou$
and then also uniformly at random select a node $v$ from $\onehopnou$ outside of
this clique, then $\lccf{\ell}{u}$ is the probability that these $\ell$ nodes
form an $\ell$-clique:
\begin{equation} \label{eq:prob_interp2}
  \lccf{\ell}{u} = \probof{K \cup \{ v \} \in K_{\ell}(\onehopnou)}.
\end{equation}

Moreover, if we condition on observing an $\ell$-clique from this sampling
procedure, then the $\ell$-clique itself is selected uniformly at random from
all $\ell$-cliques in $\onehopnou$.  Therefore, $\lccf{\ell-1}{u} \cdot
\lccf{\ell}{u}$ is the probability that an $(\ell - 1)$-clique and two nodes
selected uniformly at random from $\onehopnou$ form an $(\ell + 1)$-clique.
Applying this recursively gives
\begin{equation}
\prod_{j=2}^{\ell}\lccf{j}{u} = \frac{\lvert K_{\ell}(\onehopnou) \rvert}{{d_u \choose \ell}}.
\end{equation}
In other words, the product of the higher-order local clustering coefficients of
node $u$ up to order $\ell$ is the $\ell$-clique density amongst $u$'s
neighbors.

%!TEX root = paper-ccfs.tex

\section{Theoretical analysis and higher-order clustering in random graph models}\label{sec:theoretical}

We now provide some theoretical analysis of our higher-order clustering coefficients.
We first give some extremal bounds on the values that higher-order clustering coefficients
can take given the value of the traditional (second-order) clustering coefficient.
After, we analyze the values of higher-order clustering coefficients in two common
random graph models---the $G_{n,p}$ and small-world models.
The theory from this section will be a useful guide for interpreting the clustering
behavior of real-world networks in Section~\ref{sec:empirical}.

\subsection{Extremal bounds}

\begin{figure}[tb]
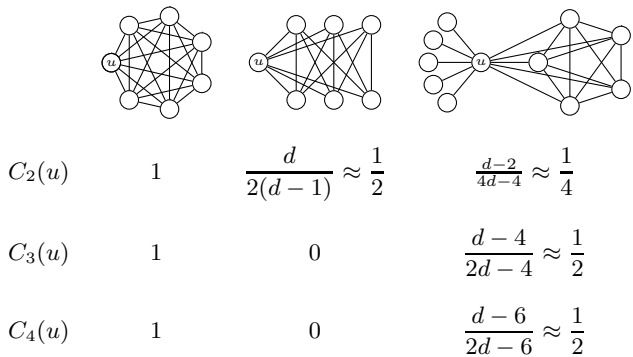

  \begin{tabular}{l @{\hskip 12pt} c @{\hskip 12pt} c @{\hskip 12pt} c}
  & \begin{tikzpicture}[baseline=(current bounding box.center)] \input{ccf_equal} \end{tikzpicture} & \begin{tikzpicture}[baseline=(current bounding box.center)] \input{ccf_lower} \end{tikzpicture} & \begin{tikzpicture}[baseline=(current bounding box.center)] \input{ccf_upper} \end{tikzpicture} \\ \\
  $\lccf{2}{u}$ & $1$ & $\dfrac{d}{2(d - 1)} \approx \dfrac{1}{2}$ & $\frac{d - 2}{4d - 4} \approx \dfrac{1}{4}$  \\  \\
  $\lccf{3}{u}$ & $1$ & $0$ & $\dfrac{d - 4}{2d - 4} \approx \dfrac{1}{2}$ \\ \\
  $\lccf{4}{u}$ & $1$ & $0$ & $\dfrac{d - 6}{2d - 6} \approx \dfrac{1}{2}$    
  \end{tabular}
  \caption{Example 1-hop neighborhoods of a node $u$ with degree $d$ with
  different higher-order clustering. Left: For cliques, $\lccf{\ell}{u} = 1$ for any $\ell$.
  Middle: If $u$'s neighbors form a complete bipartite graph, $\lccf{2}{u}$ is
  constant while $\lccf{\ell}{u} = 0$, $\ell \ge 3$.  Right: If half of $u$'s
  neighbors form a star and half form a clique with $u$, then
  $\lccf{\ell}{u} \approx \sqrt{\lccf{2}{u}}$, which is the upper bound in Proposition~\ref{prop:ccf_bounds}.}
   \label{fig:ccf_diffs}
\end{figure}

We first analyze the relationships between local higher-order clustering
coefficients of different orders.
Our technical result is Proposition~\ref{prop:ccf_bounds}, which
provides essentially tight lower and upper bounds for higher-order local clustering
coefficients in terms of the traditional local clustering coefficient.
The main ideas of the proof are illustrated in Fig.~\ref{fig:ccf_diffs}.

\begin{proposition}\label{prop:ccf_bounds}
  For any fixed $\ell \geq 3$,
\begin{equation}   \label{eq:Bound_Kappa3}
0 \leq \lccf{\ell}{u} \leq \sqrt{\lccf{2}{u}}.
\end{equation}
Moreover,
\begin{enumerate}
\item There exists a finite graph $G$ with a node $u$ such that the lower bound is
  tight and $\lccf{2}{u}$ is within $\epsilon$ of any prescribed value in $[0, \frac{\ell-2}{\ell-1}]$.
\item There exists a finite graph $G$ with a node $u$ such that $\lccf{\ell}{u}$ is within
$\epsilon$ of the upper bound for any prescribed value of $\lccf{2}{u} \in [0, 1]$.
\end{enumerate}
\end{proposition}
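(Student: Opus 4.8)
The plan is to work entirely in the $1$-hop neighborhood graph $\onehopnou$ and use the clique-counting identities from Eq.~\ref{eq:lccf_nbrs}. Write $d = d_u$ and abbreviate $a_k = \lvert K_k(\onehopnou)\rvert$ (so $a_1 = d$, $a_2 = \lvert K_3(u)\rvert$, etc.). With this notation, $\lccf{\ell}{u} = \frac{\ell\, a_\ell}{(d-\ell+1)\, a_{\ell-1}}$ for each $\ell\ge2$. The lower bound $\lccf{\ell}{u}\ge 0$ is trivial. For the upper bound, the key will be an inequality comparing ratios of consecutive clique counts: I expect the right statement to be a Kruskal--Katona / log-concavity type bound of the form $\frac{a_{\ell}}{a_{\ell-1}}\cdot(\text{something}) \le \left(\frac{a_2}{a_1}\right)^{?}$, or more directly that the normalized clique-density ratios $a_{k}/\binom{d}{k}$ are non-increasing in $k$. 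In fact the cleanest route is the product identity already derived in the excerpt: $\prod_{j=2}^{\ell}\lccf{j}{u} = a_\ell / \binom{d}{\ell}$. Since each factor $\lccf{j}{u}\le 1$ (a closure probability), this gives $\lccf{\ell}{u}\cdot\lccf{\ell-1}{u}\le \prod_{j=2}^{\ell}\lccf{j}{u}\cdot(\text{earlier factors})^{-1}$... — more carefully, I would argue that the sequence $\lccf{j}{u}$ is itself non-increasing in $j$ (this is the heart of the matter, see below), whence $\lccf{\ell}{u}^2 \le \lccf{\ell}{u}\lccf{\ell-1}{u}\cdots \le \lccf{\ell}{u}\cdot\lccf{2}{u}$ is too weak; instead $\lccf{\ell}{u}^2 \le \lccf{2}{u}\lccf{3}{u}$ and then iterate, or simply: if $\lccf{j}{u}$ is non-increasing then $\lccf{\ell}{u} \le \lccf{2}{u}$ already — but the claimed bound is the weaker $\sqrt{\lccf{2}{u}}$, so monotonicity is more than enough and the real content must be a direct two-term inequality. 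I would therefore aim to prove directly that
\begin{equation}\label{eq:key}
\lccf{\ell}{u}^2 \le \lccf{2}{u}
\end{equation}
for $\ell\ge 3$, which is equivalent, after clearing denominators via Eq.~\ref{eq:lccf_nbrs}, to a polynomial inequality purely among the clique counts $a_1,a_2,a_{\ell-1},a_\ell$ of $\onehopnou$ and the degree $d$.

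The main obstacle is establishing Eq.~\ref{eq:key}. The approach I would take is a counting/injection argument inside $\onehopnou$: an $\ell$-wedge centered at $u$ is an $(\ell-1)$-clique in $\onehopnou$ plus an outside vertex, and it is ``closed'' when these form an $\ell$-clique. Squaring $\lccf{\ell}{u}$ corresponds to picking \emph{two} such extensions of the same $(\ell-1)$-clique (or of two $(\ell-1)$-cliques) and asking both to close; I would set up a map from such doubly-closed configurations to closed $2$-wedges (edges in $\onehopnou$ together with a third neighbor) and show it loses at most the combinatorial factor $(d-\ell+1)^2/(d-1)$ versus the factor counting. Concretely, an $\ell$-clique in $\onehopnou$ contains many edges and many $(\ell-1)$-sub-cliques, giving a lower bound on $a_\ell \cdot (\text{stuff})$ in terms of $a_2$, while a generic $(\ell-1)$-clique together with two outside nodes closing into an $\ell$-clique each — this is the structure counted by $\lccf{\ell}{u}^2\cdot a_{\ell-1}\binom{d-\ell+1}{2}$-ish — must, by restricting to any fixed edge and third vertex inside, inject into the closed $2$-wedges counted by $\lccf{2}{u}\cdot a_1(a_1-1)/2$. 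Making the bookkeeping of these binomial factors match up to show exactly $\lccf{\ell}{u}^2\le\lccf{2}{u}$ (rather than a weaker constant-times bound) is the delicate part; I expect to need the $d\ge$ (appropriate bound) regime and a careful choice of which sub-structure to inject into, possibly proving the stronger clique-density monotonicity $a_{k+1}/\binom{d}{k+1}\le a_k/\binom{d}{k}$ by a standard double-counting of (clique, vertex) incidences and deducing Eq.~\ref{eq:key} as a corollary via the product identity.

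For the tightness claims I would exhibit the two explicit neighborhood configurations from Fig.~\ref{fig:ccf_diffs}. For part~(1), take $\onehopnou$ to be a complete bipartite graph $K_{s,t}$ on the $d = s+t$ neighbors of $u$ (adjusting $s,t$), which has no triangle, hence $a_k = 0$ for $k\ge 3$ and $\lccf{\ell}{u}=0$ for all $\ell\ge3$, while $\lccf{2}{u} = \frac{2st}{(s+t)(s+t-1)}$; letting $s,t\to\infty$ with $s/t$ ranging over $(0,\infty)$ lets $\lccf{2}{u} = \frac{2st}{(s+t)(s+t-1)}$ sweep the interval $(0,\tfrac12]$, and I would explain how to push this up to $\frac{\ell-2}{\ell-1}$ by instead making $\onehopnou$ a complete $(\ell-1)$-partite graph (still $K_\ell$-free, so $\lccf{\ell}{u}=0$), whose edge density approaches $\frac{\ell-2}{\ell-1}$ in the balanced limit, matching the stated range $[0,\frac{\ell-2}{\ell-1}]$ up to $\epsilon$. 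For part~(2), take the ``half-star, half-clique'' construction: $\onehopnou$ is the disjoint-ish union of an independent set of size $r$ (a star at $u$) and a clique of size $q$, so $a_k = \binom{q}{k}$; one computes $\lccf{2}{u}=\frac{q(q-1)}{(q+r)(q+r-1)}$ and $\lccf{\ell}{u} = \frac{\ell}{d-\ell+1}\cdot\frac{\binom{q}{\ell}}{\binom{q}{\ell-1}} = \frac{\ell(q-\ell+1)}{(\ell-1)(d-\ell+1)}$, and as $q,r\to\infty$ with $q/(q+r)\to\rho$ both $\lccf{2}{u}\to\rho^2$ and $\lccf{\ell}{u}\to\rho$, so $\lccf{\ell}{u}\to\sqrt{\lccf{2}{u}}$ while $\lccf{2}{u}$ ranges over all of $(0,1)$; the endpoints $0$ and $1$ (the latter being $\onehopnou$ a clique, where $\lccf{\ell}{u}=1$ for all $\ell$) are handled separately, completing the ``within $\epsilon$ of any prescribed value'' statements. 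The only care needed here is verifying the adjacency structure of $\onehopnou$ is actually realizable as a $1$-hop neighborhood in \emph{some} host graph $G$, which is automatic: given any graph $H$, let $G$ be $H$ plus a new apex vertex $u$ joined to all of $V(H)$.
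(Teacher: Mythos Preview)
Your tightness constructions (parts~1 and~2) are essentially those of the paper: complete $(\ell-1)$-partite $\onehopnou$ for the lower bound, and clique plus isolated vertices for the upper bound. (There is a small arithmetic slip in your part~2 computation --- the ratio $\binom{q}{\ell}/\binom{q}{\ell-1} = (q-\ell+1)/\ell$, so the factor of $\ell$ cancels and $\lccf{\ell}{u} = (q-\ell+1)/(d-\ell+1)$ with no extra $\ell/(\ell-1)$ --- but the limiting conclusion is unaffected.)

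The gap is in the upper bound itself. You name Kruskal--Katona at the outset and then abandon it for an injection argument that is never made precise, and your fallback --- the ``stronger clique-density monotonicity'' $a_{k+1}/\binom{d}{k+1} \le a_k/\binom{d}{k}$ --- is not strong enough. Writing $\delta_k = a_k/\binom{d}{k}$, one has exactly $\lccf{\ell}{u} = \delta_\ell/\delta_{\ell-1}$ (this is the product identity unwound by one step), so the monotonicity $\delta_\ell \le \delta_{\ell-1}$ only recovers the trivial $\lccf{\ell}{u} \le 1$. It does \emph{not} yield $\lccf{\ell}{u}^2 \le \lccf{2}{u}$, and your own tightness example shows why a monotonicity-of-$\lccf{j}{u}$ approach must fail too: in the clique-plus-isolates construction $\lccf{\ell}{u} \approx \rho > \rho^2 \approx \lccf{2}{u}$, so $\lccf{j}{u}$ is not monotone in $j$.

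What the paper does is exactly the Kruskal--Katona step you gestured at and then dropped. In its density (Lov\'asz) form it gives $\delta_\ell \le \delta_{\ell-1}^{\ell/(\ell-1)}$ and, iterated, $\delta_{\ell-1} \le \delta_2^{(\ell-1)/2}$. Then
\[
\lccf{\ell}{u} = \frac{\delta_\ell}{\delta_{\ell-1}} \le \delta_{\ell-1}^{1/(\ell-1)} \le \sqrt{\delta_2} = \sqrt{\lccf{2}{u}},
\]
and the proof is two lines. Your double-counting establishes only the linear inequality $\delta_\ell \le \delta_{\ell-1}$; getting the needed power $\ell/(\ell-1)$ is genuinely the content of Kruskal--Katona and is not recoverable by an elementary (clique, vertex) incidence count.
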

\begin{proof}
Clearly, $0 \leq \lccf{\ell}{u}$ if the local clustering coefficient is well
defined. This bound is tight when $\onehopnou$ is $(\ell - 1)$-partite, as 
in the middle column of Fig.~\ref{fig:ccf_diffs}.
In the $(\ell - 1)$-partite case, $\lccf{2}{u} = \frac{\ell-2}{\ell-1}$.
By removing edges from this extremal case in a sufficiently large graph,
we can make $\lccf{2}{u}$ arbitrarily close to any value in $[0, \frac{\ell-2}{\ell-1}]$.

To derive the upper bound, consider the 1-hop neighborhood $\onehopnou$, and let
\begin{equation}
\delta_\ell(\onehopnou) = \frac{\lvert K_{\ell}(\onehopnou) \rvert}{{d_u \choose \ell}}
\end{equation}
denote the $\ell$-clique density of $\onehopnou$.  The Kruskal-Katona
theorem~\cite{kruskal1963number,katona1966theorem} implies that
\begin{align*}
&\delta_{\ell}(\onehopnou) \leq [\delta_{\ell - 1}(\onehopnou)]^{\ell / (\ell - 1)} \\
&\delta_{\ell - 1}(\onehopnou) \leq [\delta_{2}(\onehopnou)]^{(\ell - 1) / 2}.
\end{align*}
Combining this with Eq.~\ref{eq:deriv_lccf} gives
\begin{align*}
  \lccf{\ell}{u} &\leq [\delta_{\ell - 1}(\onehopnou)]^{\frac{1}{\ell - 1}}
  \leq \sqrt{\delta_{2}(\onehopnou)} = \sqrt{\lccf{2}{u}},
\end{align*}
where the last equality uses the fact that $\lccf{2}{u}$ is the edge density of
$\onehopnou$.

The upper bound becomes tight when $\onehopnou$ consists of a clique and isolated nodes
(Fig.~\ref{fig:ccf_diffs}, right) and the neighborhood is sufficiently large.
Specifically, let $\onehopnou$ consist of a clique of size $c$ and $b$ isolated nodes.  
When $\ell = 2$,
\begin{equation*}
\lccf{\ell}{u} = \frac{{c \choose 2}}{{c + b \choose 2}}
= \frac{(c - 1)c}{(c + b - 1)(c + b)} \to \left(\frac{c}{c + b}\right)^2
\end{equation*}
and by Eq.~\ref{eq:lccf_nbrs}, when $3 \le \ell \le c$,
\begin{align*}
\lccf{\ell}{u} 
&= \frac{\ell \cdot {c \choose \ell}}{(c + b - \ell + 1) \cdot {c \choose \ell - 1}} 
= \frac{c - \ell + 1}{c + b - \ell + 1} \to \frac{c}{c + b}.
\end{align*}

By adjusting the ratio $c / (b + c)$ in $\onehopnou$, we can construct a family
of graphs such that $\lccf{2}{u}$ takes any value in the interval $[0,1]$ as $d_u \to \infty$
and $\lccf{\ell}{u} \to \sqrt{\lccf{2}{u}}$ as $d_u \to \infty$.
\end{proof}

The second part of the result requires the neighborhoods to be sufficiently large in order
to reach the upper bound. However, we will see later that in some real-world data, 
there are nodes $u$ for which $\lccf{3}{u}$ is close to the upper bound $\sqrt{\lccf{2}{u}}$ 
for several values of $\lccf{2}{u}$.

Next, we analyze higher-order clustering coefficients in two common random graph
models: the Erd\H{o}s-R\'enyi model with edge probability $p$ (i.e.,
the $G_{n,p}$ model~\cite{erdos1959random}) and the small-world
model~\cite{watts1998collective}.

\subsection{Analysis for the $G_{n,p}$ model}

Now, we analyze higher-order clustering coefficients in classical
Erd\H{o}s-R\'enyi random graph model, where each edge exists 
independently with probability $p$ (i.e., the $G_{n,p}$
model~\cite{erdos1959random}).  
We implicitly assume that $\ell$ is small in the following analysis so that there should be at
least one $\ell$-wedge in the graph (with high probability and $n$
large, there is no clique of size greater than $(2 + \epsilon)\log n / \log (1 / p)$
for any $\epsilon > 0$~\cite{bollobas1976cliques}).
Therefore, the global and local clustering coefficients are well-defined.

In the $G_{n, p}$ model, we first observe that any $\ell$-wedge is closed
if and only if the $\ell - 1$ possible edges between the $\ell$-clique and the
outside node in the adjacent edge exist to form an $(\ell+1)$-clique.  Each of
the $\ell - 1$ edges exist independently with probability $p$ in the $G_{n, p}$
model, which means that the higher-order clustering coefficients should
scale as $p^{\ell - 1}$. We formalize this in the following proposition.

\begin{proposition}\label{prop:ccf_er}
Let $G$ be a random graph drawn from the $G_{n, p}$ model.
For constant $\ell$,
\begin{enumerate}
\item $\expectover{G}{\gccf{\ell}} = p^{\ell - 1}$
\item $\expectover{G}{\lccf{\ell}{u} \given W_{\ell}(u) > 0} = p^{\ell - 1}$ for any node $u$
\item $\expectover{G}{\accf{\ell}} = p^{\ell - 1}$
\end{enumerate}
\end{proposition}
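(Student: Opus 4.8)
The plan is to compute each expectation by linearity, exploiting the fact that in $G_{n,p}$ the relevant numerators and denominators have a clean product structure once we condition appropriately. I would start with part (2), the local statement, since parts (1) and (3) will follow by the same mechanism. Fix a node $u$ and condition on the event $\{W_\ell(u) > 0\}$; equivalently, condition on the realized $1$-hop neighborhood graph $\onehopnou$ having at least one $(\ell-1)$-clique. By Eq.~\ref{eq:prob_interp2}, $\lccf{\ell}{u}$ is the probability — over the internal randomness of selecting an $(\ell-1)$-clique $K$ from $\onehopnou$ and a node $v$ outside it — that $K \cup \{v\}$ is an $\ell$-clique. The key observation is that the edges of $G$ are mutually independent, so once we fix which $(\ell-1)$-clique $K \subseteq \onehopnou$ and which outside node $v$ have been selected, the $\ell-1$ potential edges from $v$ to the vertices of $K$ are each present independently with probability $p$, and these edges are disjoint from (hence independent of) the edges that certify $K$ as a clique and $v$ as lying in $\onehopnou$. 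Taking the expectation of Eq.~\ref{eq:prob_interp2} over the graph randomness and swapping the order of the (finite) expectations, every realized $(K,v)$ pair contributes exactly $p^{\ell-1}$, so $\expectover{G}{\lccf{\ell}{u} \mid W_\ell(u) > 0} = p^{\ell-1}$. Part (3) is then immediate from Eq.~\ref{eq:def_accf} and linearity of expectation: $\accf{\ell}$ is an average of local coefficients over the (random) set $\welldefnodes_\ell$, but each term, conditioned on the node being in $\welldefnodes_\ell$, has expectation $p^{\ell-1}$, so the average does too.

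For part (1), I would work directly from Eq.~\ref{eq:global_ccf_l}, $\gccf{\ell} = (\ell^2+\ell)\lvert K_{\ell+1}\rvert / \lvert W_\ell \rvert$. The cleanest route is to condition on $\lvert W_\ell \rvert$ and show $\expectover{G}{\lvert K_{\ell+1}\rvert \mid \lvert W_\ell\rvert} = \lvert W_\ell\rvert / (\ell^2+\ell)$. To see this, note that each $(\ell+1)$-clique closes exactly $\ell^2+\ell$ distinct $\ell$-wedges (the coefficient justified in the text), so $(\ell^2+\ell)\lvert K_{\ell+1}\rvert$ counts closed $\ell$-wedges; thus it suffices to show that, conditioned on a given $\ell$-wedge $w$ being present, the probability it is closed is $p^{\ell-1}$ — and then sum over the $\lvert W_\ell\rvert$ present wedges. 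A wedge $w$ consists of an $\ell$-clique together with an adjacent edge through the center; whether $w$ is closed is determined by the $\ell-1$ edges between the outside node of the adjacent edge and the $\ell-1$ non-center vertices of the $\ell$-clique, and by independence these are present simultaneously with probability $p^{\ell-1}$, independently of the $\binom{\ell}{2}+1$ edges that make up $w$ itself. Hence $\expectover{G}{(\ell^2+\ell)\lvert K_{\ell+1}\rvert \mid \lvert W_\ell\rvert} = p^{\ell-1}\lvert W_\ell\rvert$, and dividing by $\lvert W_\ell\rvert$ and taking the outer expectation gives $\expectover{G}{\gccf{\ell}} = p^{\ell-1}$.

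The main obstacle is handling the ratio of two random quantities without sloppiness: $\gccf{\ell}$ and $\lccf{\ell}{u}$ are each a quotient where numerator and denominator are correlated through the graph randomness, so one cannot naively write $\expect{\text{numerator}}/\expect{\text{denominator}}$. The fix I plan to use is conditioning on the denominator (equivalently, on the set of present wedges, or on $\onehopnou$): once the wedges are fixed, the numerator becomes a sum of indicator variables over those fixed wedges, each indicator depending only on a batch of $\ell-1$ edges that is disjoint across what has been conditioned on, so the conditional expectation of the ratio is exactly $p^{\ell-1}$ regardless of the conditioning value — which then makes the outer expectation trivial. A minor point to state carefully is the conditioning event $\{W_\ell(u) > 0\}$ in part (2): we need it to be non-empty so the coefficient is defined, and one should note that when $\ell$ is constant and $p$ fixed this event has positive probability for $n$ large (indeed whp there are $\ell$-cliques, as already remarked in the text via the clique-size bound of~\cite{bollobas1976cliques}), but the identity $\expectover{G}{\lccf{\ell}{u}\mid W_\ell(u)>0}=p^{\ell-1}$ holds whenever the conditioning event is non-null and requires nothing further.
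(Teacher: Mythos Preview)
Your approach is essentially the same as the paper's: condition on the realized wedge set (globally, or at $u$), use that each wedge's closure is governed by $\ell-1$ edges disjoint from the wedge itself and hence has conditional probability $p^{\ell-1}$, then average and take the outer expectation. The only differences are cosmetic---you lead with part (2) via the probabilistic interpretation of Eq.~\ref{eq:prob_interp2} and deduce (1) and (3) afterward, whereas the paper leads with (1) directly from the definition and dispatches (2) and (3) in one line each.
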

\begin{proof}
  We prove the first part by conditioning on the set of $\ell$-wedges, $W_{\ell}$:
  \begin{align*}
    \expect{\gccf{\ell}}
    &=\textstyle \expectover{G}{\expectover{W_{\ell}}{\gccf{\ell} \given W_{\ell}}} \\
    &=\textstyle  \expectover{G}{\expectover{W_{\ell}}{\frac{1}{\lvert W_{\ell}\rvert}\sum_{w \in W_{\ell}}\probof{w \text{ is closed}}}} \\
    &=\textstyle \expectover{G}{\expectover{W_{\ell}}{\frac{1}{\lvert W_{\ell}\rvert}\sum_{w \in W_{\ell}}p^{\ell - 1}}} \\
    &=\textstyle \expectover{G}{p^{\ell - 1}} \\
    &=\textstyle p^{\ell - 1}.
  \end{align*}  
  As noted above, the second equality is well defined (with high probability)
  for small $\ell$.  The third equality comes from the fact that any
  $\ell$-wedge is closed if and only if the $\ell - 1$ possible edges between
  the $\ell$-clique and the outside node in the adjacent edge exist to form an
  $(\ell+1)$-clique.

  The proof of the second part is essentially the same, except we condition over
  the set of possible cases where $W_{\ell}(u) > 0$.

  Recall that $\welldefnodes$ is the set of nodes at the center of at least one
  $\ell$-wedge.  To prove the third part, we take the conditional expectation
  over $\welldefnodes$ and use our result from the second part.
\end{proof} 
  
The above results say that the global, local, and average $\ell$th order clustering
coefficients decrease exponentially in $\ell$.
It turns out that if we also condition on the second-order clustering coefficient
having some fixed value, then the higher-order clustering coefficients
still decay exponentially in $\ell$ for the $G_{n,p}$ model.
This will be useful for interpreting the distribution of local clustering coefficients
on real-world networks.

\begin{proposition}\label{prop:ccf_er_cond}
Let $G$ be a random graph drawn from the $G_{n, p}$ model. 
Then for constant $\ell$,
\begin{align*}
& \expectover{G}{\lccf{\ell}{u} \given \lccf{2}{u}, W_{\ell}(u) > 0} \\
&=  \left[\lccf{2}{u} - (1 - \lccf{2}{u}) \cdot O(1 / d_u^2)\right]^{\ell - 1} 
\approx (\lccf{2}{u})^{\ell - 1}.
\end{align*}
\end{proposition}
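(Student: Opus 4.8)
The plan is to condition on the neighborhood graph $\onehopnou$ and use the probabilistic interpretation in Eq.~\ref{eq:lccf_nbrs}: once $\lccf{2}{u}$ is fixed, this fixes the number of edges $m_u := \lvert K_2(\onehopnou) \rvert$ in $\onehopnou$, since $\lccf{2}{u}$ is precisely the edge density of $\onehopnou$ and $d_u$ is determined by conditioning on $W_\ell(u) > 0$ together with the ambient model (more carefully, we condition on $d_u$ as well; the degree distribution of $u$ is concentrated and does not affect the leading-order answer). Conditioned on having exactly $m_u$ edges among the $d_u$ neighbors, the graph $\onehopnou$ is itself an Erd\H{o}s--R\'enyi graph $G_{d_u, m_u}$ in the \emph{uniform-edge-count} model. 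So the task reduces to computing $\expectover{}{\lccf{\ell}{u}}$ for a $G_{d_u, m_u}$ graph, i.e. estimating $\expectover{}{\lvert K_\ell(\onehopnou)\rvert}$ and $\expectover{}{\lvert K_{\ell-1}(\onehopnou)\rvert}$ and taking their ratio via Eq.~\ref{eq:lccf_nbrs}.

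First I would compute the expected clique counts in $G_{d_u, m_u}$. For any fixed set of $k$ vertices, the probability that all $\binom{k}{2}$ edges among them are present is a ratio of binomial coefficients: $\binom{N - \binom{k}{2}}{m_u - \binom{k}{2}} / \binom{N}{m_u}$ where $N = \binom{d_u}{2}$. Writing $q := m_u / N = \lccf{2}{u}$ for the edge density, this probability equals $q^{\binom{k}{2}}$ times a correction factor of the form $\prod_{i}\frac{m_u - i}{N - i} / q = \prod_i \bigl(1 - \frac{i(1-q)}{q(N-i)}\bigr)$ over $i = 0, \dots, \binom{k}{2} - 1$; since $k$ is constant and $N = \Theta(d_u^2)$, this is $q^{\binom{k}{2}}\bigl(1 - O((1-q)/(q d_u^2))\bigr)$. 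Hence $\expectover{}{\lvert K_k(\onehopnou)\rvert} = \binom{d_u}{k} q^{\binom{k}{2}}\bigl(1 - O((1-q)/(q d_u^2))\bigr)$. Plugging $k = \ell$ and $k = \ell - 1$ into Eq.~\ref{eq:lccf_nbrs} and simplifying the binomial prefactors $\frac{\ell \binom{d_u}{\ell}}{(d_u - \ell + 1)\binom{d_u}{\ell - 1}} = 1$, the ratio of the leading terms is $q^{\binom{\ell}{2} - \binom{\ell-1}{2}} = q^{\ell - 1}$, and the correction factors combine to give $\bigl[q - (1 - q)\cdot O(1/d_u^2)\bigr]^{\ell-1}$ after absorbing the $O(\cdot)$ terms, matching the claimed form.

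The main obstacle is the step from $\expectover{}{\lccf{\ell}{u}} = \expectover{}{\ell \lvert K_\ell \rvert} / \bigl((d_u - \ell + 1)\expectover{}{\lvert K_{\ell-1}\rvert}\bigr)$ being what we \emph{want} to this being what we \emph{get}: $\lccf{\ell}{u}$ is a ratio of two random variables, so $\expectover{}{\lccf{\ell}{u}} \ne \expectover{}{\ell\lvert K_\ell\rvert}/\expectover{}{(d_u-\ell+1)\lvert K_{\ell-1}\rvert}$ in general. I would handle this the way Proposition~\ref{prop:ccf_er} is handled — by conditioning further on the set $K_{\ell-1}(\onehopnou)$ of $(\ell-1)$-cliques among the neighbors, so that $\lccf{\ell}{u}$ becomes an average, over the $\ell$-wedges, of the (conditional) probability that a wedge closes; in $G_{d_u, m_u}$ this closure probability, given the $(\ell-1)$-clique $K$ and the outside node $v$, is the chance that the $\ell-1$ edges from $v$ to $K$ are all present among the remaining unobserved edge slots, which is again a ratio of binomials equal to $q^{\ell-1}\bigl(1 - O((1-q)/(q d_u^2))\bigr)$ uniformly over the choice of $K$ and $v$. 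Averaging this uniform bound over all wedges and then taking the outer expectation over $\onehopnou$ gives the result without ever dividing two random variables. The one remaining subtlety is making the conditioning on $W_\ell(u) > 0$ rigorous — this is a negligible-probability event to condition away for constant $\ell$ and large $n$, exactly as invoked in Proposition~\ref{prop:ccf_er}, and I would dispatch it the same way.
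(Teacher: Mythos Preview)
Your final approach---conditioning on the set of $(\ell-1)$-cliques in $\onehopnou$ (equivalently, on $W_\ell(u)$) so that $\lccf{\ell}{u}$ becomes an average of per-wedge closure probabilities, and then computing each such probability as a hypergeometric ratio of binomials yielding a product of $\ell-1$ factors each equal to $\lccf{2}{u} - (1-\lccf{2}{u})\cdot O(1/d_u^2)$---is exactly what the paper does. The detour through expected clique counts in your second paragraph is unnecessary (as you yourself flag), but the corrected argument in your third paragraph matches the paper's proof essentially line for line.
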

\begin{proof}
  Similar to the proof of Proposition~\ref{prop:ccf_er_cond}, we look at the conditional
  expectation over $W_{\ell}(u) > 0$:
  \begin{align*}
    &\textstyle \expectover{G}{\lccf{\ell}{u} \given \lccf{2}{u}, W_{\ell}(u) > 0} \\
    &=\textstyle \expectover{G}{\expectover{W_{\ell}(u) > 0}{\lccf{\ell}{u} \given \lccf{2}{u},\; W_{\ell}(u)}} \\
    &=\textstyle \expectover{G}{\expectover{W_{\ell}(u) > 0}{\frac{1}{\lvert W_{\ell}(u)\rvert}\sum_{w \in W_{\ell}(u)}\probof{w \text{ closed} \given \lccf{2}{u}}}}.
  \end{align*}  
  Now, note that $\onehopnou$ has $m = \lccf{2}{u} \cdot {d_u \choose 2}$ edges.
  Knowing that $w \in W_{\ell}(u)$ accounts for ${\ell - 1 \choose 2}$ of these
  edges.  By symmetry, the other $q = m - {\ell - 1 \choose 2}$ edges appear in
  any of the remaining $r = {d_u \choose 2} - {\ell - 1 \choose 2}$ pairs of nodes
  uniformly at random.  There are ${r \choose q}$ ways to place these edges, of
  which ${r - \ell + 1 \choose q - \ell + 1}$ would close the wedge $w$.  Thus,
  \begin{align*}
  &\probof{w \text{ is closed} \given \lccf{2}{u}} \\
  &=\textstyle \frac{{r - \ell + 1 \choose q - \ell + 1}}{{r \choose q}}  
  = \frac{(r - \ell + 1)!q!}{(q - \ell + 1)!r!}
  = \frac{(q - \ell + 2)(q - \ell + 3) \cdots q}{(r - \ell + 2)(r - \ell + 3) \cdots r}.
  \end{align*}
  Now, for any small nonnegative integer $k$,
  \begin{align*}
  \frac{q - k}{r - k} &=\textstyle \frac{
  \lccf{2}{u} \cdot {d_u \choose 2} - {\ell -1 \choose 2} - k
  }{
  {d_u \choose 2} - {\ell - 1 \choose 2} - k
  } \\
  &=\textstyle \lccf{2}{u} - (1 - \lccf{2}{u})\left[\frac{{\ell -1 \choose 2} + k}{{d_u \choose 2} - {\ell - 1 \choose 2} - k}\right]  \\
  &=\textstyle \lccf{2}{u} - (1 - \lccf{2}{u}) \cdot O(1 / d_u^2).
  \end{align*}
(Recall that $\ell$ is constant by assumption, so the big-O notation is appropriate).
The above expression approaches $(\lccf{2}{u})^{\ell - 1}$ when $\lccf{2}{u} \to 1$ 
as well as when $d_u \to \infty$.
\end{proof}

Proposition~\ref{prop:ccf_er_cond} says that even if the second-order local
clustering coefficient is large, the $\ell$th-order clustering coefficient will
still decay exponentially in $\ell$, at least in the limit as $d_u$ grows large.  
By examining higher-order clique closures, this allows us to distinguish
between nodes $u$ whose neighborhoods are ``dense but random"
($\lccf{2}{u}$ is large but $\lccf{\ell}{u} \approx (\lccf{2}{u})^{\ell - 1}$)
or ``dense and structured" ($\lccf{2}{u}$ is large \emph{and} $\lccf{\ell}{u} > (\lccf{2}{u})^{\ell - 1}$).
Only the latter case exhibits higher-order clustering.
We use this in our analysis of real-world networks in Section~\ref{sec:empirical}.

\subsection{Analysis for the small-world model}

We also study higher-order clustering in the small-world random graph
model~\cite{watts1998collective}. The model begins with a ring network where
each node connects to its $2k$ nearest neighbors.  Then, for each node $u$ and
each of the $k$ edges $(u, v)$ with $v$ following $u$ clockwise in the ring,
the edge is rewired to $(u, w)$ with probability $p$, where $w$ is chosen
uniformly at random.

\begin{figure}[tb]
  \begin{centering}
  \includegraphics[width=1.0\columnwidth]{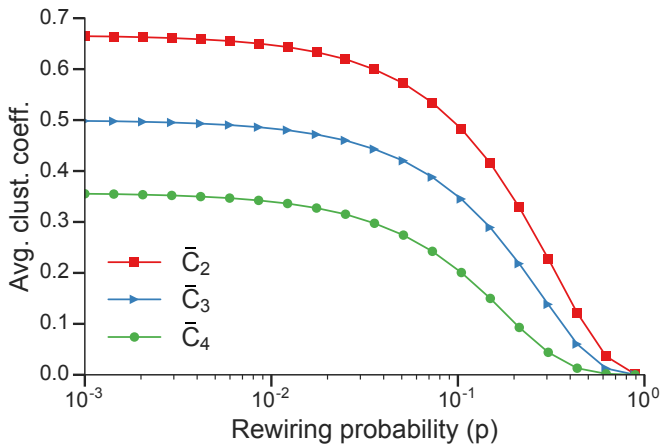}
  \end{centering}
\caption{Average higher-order clustering coefficient $\accf{\ell}$ as a
  function of rewiring probability $p$ in small-world networks for $\ell = 2, 3, 4$
  ($n = 20,000$, $k = 5$). Proposition~\ref{prop:ccf_sw} shows that
  the $\ell$th-order clustering coefficient when $p = 0$ predicts that the clustering should decrease
  modestly as $\ell$ increases.}
  \label{fig:sw_ccfs}
\end{figure}

With no rewiring ($p = 0$) and $k \ll n$, 
it is known that $\accf{2} \approx 3/4$~\cite{watts1998collective}.
As $p$ increases, the average clustering coefficient $\accf{2}$ slightly decreases until a phase transition near $p = 0.1$,
where $\accf{2}$ decays to $0$~\cite{watts1998collective} (also see
Fig.~\ref{fig:sw_ccfs}).
Here, we generalize these results for higher-order clustering coefficients.
\begin{proposition}\label{prop:ccf_sw}
In the small-world model without rewiring ($p = 0$),
\begin{equation*}%\label{eq:C_SW}
  \accf{\ell} \rightarrow (\ell + 1) / (2\ell)
\end{equation*}
for any constant $\ell \geq 2$ as $k \to \infty$ and $n \to \infty$ while $2k < n$.
\end{proposition}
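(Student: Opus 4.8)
The plan is to use that with no rewiring ($p=0$) the small-world network is the circulant graph on $\mathbb{Z}_n$ with connection set $\{\pm 1,\dots,\pm k\}$, which is vertex-transitive. Hence every node has the same local $\ell$th-order clustering coefficient (and it is well defined at every node once $k$ is large enough that $\onehopnou$ contains $\ell$ consecutive vertices and $2k-\ell+1\ge 1$), so $\accf{\ell}=\lccf{\ell}{u}$ for an arbitrary node $u$. By Eq.~\ref{eq:lccf_nbrs}, it then suffices to count $(\ell-1)$-cliques and $\ell$-cliques in the one-hop neighborhood graph $\onehopnou$: $\lccf{\ell}{u}$ equals $\ell\,|K_\ell(\onehopnou)|$ divided by $(d_u-\ell+1)\,|K_{\ell-1}(\onehopnou)|$, with $d_u=2k$.

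First I would pin down the structure of $\onehopnou$. Provided $n$ is large enough relative to $k$ (e.g.\ $n>3k$ — and note that for $n$ comparable to $2k$ the neighborhoods become nearly complete and the limit would be different, so the hypothesis should be read in the usual small-world regime $k\ll n$), there is no ring wraparound among the neighbors of $u$, and $\onehopnou$ is isomorphic to the graph on $\{0,1,\dots,2k\}\setminus\{k\}$ with $i\sim j$ iff $0<|i-j|\le k$; equivalently, the $k$th power of a path on $2k+1$ vertices with its central vertex deleted. The key structural fact is that a vertex set of this graph (or of the path power) is a clique if and only if it lies in a window of width $k$, i.e.\ $\max-\min\le k$.

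Next I would count cliques. Conditioning a $j$-element clique on its smallest element and summing via the hockey-stick identity $\sum_{m=0}^{k-1}\binom{m}{j-1}=\binom{k}{j}$ gives $|K_j(\onehopnou)|=k\binom{k-1}{j-1}+\binom{k}{j}$; as sanity checks this is $2k$ for $j=1$ and $3k(k-1)/2$ for $j=2$, the latter recovering the known limit $\accf{2}\to 3/4$. For fixed $\ell$ the leading term as $k\to\infty$ is $|K_j(\onehopnou)|=(j+1)k^j/j!+O(k^{j-1})$, so
\[
\lccf{\ell}{u}=\frac{\ell}{2k-\ell+1}\cdot\frac{|K_\ell(\onehopnou)|}{|K_{\ell-1}(\onehopnou)|}\;\longrightarrow\;\frac{\ell}{2k}\cdot\frac{(\ell+1)k/\ell!}{\ell/(\ell-1)!}=\frac{\ell+1}{2\ell},
\]
and since $\accf{\ell}=\lccf{\ell}{u}$ this gives the claimed limit. (To be fully rigorous one carries the $1+O(1/k)$ corrections through, but they plainly wash out.)

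I expect the only real obstacle to be getting the structure of $\onehopnou$ exactly right — in particular ruling out ring wraparound, which is what forces the regime $k\ll n$ rather than merely $2k<n$ — together with the bookkeeping of the various off-by-one shifts: the deleted central vertex, the $d_u-\ell+1$ factor, and the $k-1$ versus $k$ in the binomial coefficients. Once those are in place, the clique count and the limit are elementary.
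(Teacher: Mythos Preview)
Your proposal is correct and follows essentially the same route as the paper: use vertex-transitivity to reduce $\accf{\ell}$ to $\lccf{\ell}{u}$ at a single node, observe that a subset of $\onehopnou$ is a clique iff its labels lie in a window of width at most $k$, count such cliques, and pass to the $k\to\infty$ limit via Eq.~\ref{eq:lccf_nbrs}. Your execution is in fact a bit tidier than the paper's---conditioning on the smallest label and applying the hockey-stick identity gives the exact closed form $|K_j(\onehopnou)|=k\binom{k-1}{j-1}+\binom{k}{j}$, whereas the paper conditions on the span and carries an asymptotic sum---and your observation that one really needs roughly $n>3k$ (not merely $2k<n$) to exclude wraparound edges in $\onehopnou$ is a genuine point the paper glosses over.
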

\begin{proof}
Applying  Eq.~\ref{eq:deriv_lccf}, it suffices to show that
\begin{equation}   \label{eq:K_SW}
\lvert K_{\ell}(u) \rvert = \frac{\ell}{(\ell - 1)!}\cdot k^{\ell -1 } + O(k^{\ell - 2})
\end{equation}
as 
\begin{align*}
\lccf{\ell}{u} = \frac{\ell \cdot \frac{(\ell + 1) k^{\ell}}{\ell!}}{(2k - \ell + 1) \cdot \frac{\ell k^{\ell -1 }}{(\ell - 1)!} },
\end{align*}
which approaches $\frac{\ell + 1}{2\ell}$ as $k \to \infty$.

Now we give a derivation of Eq.~\ref{eq:K_SW}. 
We first label the $2k$ neighbors of $u$ as $1, 2, \ldots, 2k$ by their clockwise ordering in the ring.
Since $2k < n$, these nodes are unique.
Next, define the
\emph{span} of any $\ell$-clique containing $u$ as the difference between the
largest and smallest label of the $\ell-1$ nodes in the clique other than $u$.
The span $s$ of any $\ell$-clique satisfies $s \leq k-1$ since any node is
directly connected with a node of label difference no greater than $k-1$.  Also,
$s \geq \ell-2$ since there are $\ell-1$ nodes in an $\ell$-clique other than
$u$.  For each span $s$, we can find $2k-1-s$ pairs of $(i, j)$ such that
$1\leq i$, $j \leq 2k$ and $j - i = s$.  Finally, for every such pair $(i, j)$,
there are ${s - 1 \choose \ell - 3}$ choices of $\ell-3$ nodes between $i$ and
$j$ which will form an $\ell$-clique together with nodes $u$, $i$, and
$j$. Therefore,
\begin{align*}
\lvert \cliqueL{\ell}{u} \rvert
&=\textstyle \sum_{s = \ell-2}^{k-1} (2k-1-s) \cdot {s - 1 \choose \ell - 3} \\
&=\textstyle \sum_{s = \ell-2}^{k-1} (2k-1-s) \cdot \frac{(s-1)(s - 2) \cdots (s - \ell + 3)}{(\ell - 3)!} \\
% t = s - \ell + 3
&=\textstyle \sum_{t = 1}^{k-\ell+2} (2k + 2 -t - \ell) \cdot \frac{t (t + 1) \cdots (t + \ell - 4)}{(\ell - 3)!}.
\end{align*}
If we ignore lower-order terms $k$ and note that $t = O(k)$, we get
\begin{align*}
\textstyle \lvert \cliqueL{\ell}{u} \rvert 
&= \textstyle \sum_{t = 1}^{k} \left[ \frac{(2k - t)t^{\ell - 3}}{(\ell - 3)!} + O(k^{\ell - 3}) \right]
\\&=\textstyle \frac{1}{(\ell - 3)!}\sum_{t = 1}^{k} (2kt^{\ell - 3} - t^{\ell - 2}) + O(k^{\ell - 2}).
\\ & = \textstyle \frac{1}{(\ell - 3)!}\left[2k \cdot \frac{k^{\ell - 2}}{\ell - 2} - \frac{k^{\ell - 1}}{\ell - 1}  \right] + O(k^{\ell - 2}),
\\ & = \textstyle \frac{\ell }{(\ell - 1)!} \cdot k^{\ell -1 } + O(k^{\ell - 2}) .
\end{align*}
\end{proof}

Proposition~\ref{prop:ccf_sw} shows that, when $p = 0$, $\accf{\ell}$ decreases as $\ell$ increases.
Furthermore, via simulation, we observe the same behavior as for $\accf{2}$ when
adjusting the rewiring probability $p$ (Fig.~\ref{fig:sw_ccfs}). Regardless of
$\ell$, the phase transition happens near $p = 0.1$.
Essentially, once there is enough rewiring, all local clique structure is lost, and 
clustering at all orders is lost.
This is partly a consequence of Proposition~\ref{prop:ccf_bounds}, which
says that  $\lccf{\ell}{u} \to 0$ as $\lccf{2}{u} \to 0$ for any $\ell$.

%!TEX root = paper-ccfs.tex

\section{Experimental results on real-world networks}\label{sec:empirical}

We now analyze the higher-order clustering of real-world networks.
We first study how the higher-order global and average clustering coefficients vary
as we increase the order $\ell$ of the clustering coefficient on a collection
of 20 networks from several domains.
After, we concentrate on a few representative networks and 
compare the higher-order clustering of real-world networks to null models. 
We find that only some networks exhibit higher-order clustering once the traditional
clustering coefficient is controlled.
Finally, we examine the local clustering of real-world networks.

\subsection{Higher-order global and average clustering}

% Big table with higher-order clustering coefficients
%!TEX root = paper-ccfs.tex

\begin{table*}[t]
  \begin{tabular}{l r r @{\hskip 12pt} c c c @{\hskip 12pt} c c c @{\hskip 12pt} c c c}
    \toprule
    Network        & Nodes & Edges & $\gccf{2}$ & $\gccf{3}$ & $\gccf{4}$ & $\accf{2}$ & $\accf{3}$ & $\accf{4}$ & $\lvert \welldefnodes_2 \rvert / \lvert V \rvert$ & $\lvert \welldefnodes_3 \rvert / \lvert V \rvert$ & $\lvert \welldefnodes_4 \rvert / \lvert V \rvert$ \\ \midrule
    $\ER$ \cite{erdos1959random}                               & 1,000     & 99,831      & 0.200 & 0.040 & 0.008 & 0.200 & 0.040 & 0.008 & 1.000 & 1.000 & 1.000 \\
    $\smallworld$ \cite{watts1998collective}                   & 20,000    & 100,000     & 0.480 & 0.359 & 0.229 & 0.489 & 0.350 & 0.205 & 1.000 & 1.000 & 0.999 \\
    \midrule
    \emph{P.\ pacificus} \cite{bumbarger2013system}            & 50        & 576         & 0.015 & 0.051 & 0.035 & 0.073 & 0.052 & 0.034 & 0.880 & 0.580 & 0.440 \\
    \emph{C.\ elegans} \cite{watts1998collective}              & 297       & 2,148       & 0.181 & 0.080 & 0.056 & 0.308 & 0.137 & 0.062 & 0.949 & 0.926 & 0.808 \\
    Drosophila-medulla \cite{takemura2013visual}               & 1,781     & 32,311      & 0.000 & 0.002 & 0.001 & 0.116 & 0.061 & 0.024 & 0.803 & 0.616 & 0.425 \\
    mouse-retina \cite{helmstaedter2013connectomic}            & 1,076     & 577,350     & 0.008 & 0.038 & 0.029 & 0.033 & 0.100 & 0.085 & 0.998 & 0.996 & 0.994 \\
    \midrule
    fb-Stanford \cite{traud2012social}                         & 11,621    & 568,330     & 0.157 & 0.107 & 0.116 & 0.253 & 0.181 & 0.157 & 0.955 & 0.922 & 0.877 \\
    fb-Cornell \cite{traud2012social}                          & 18,660    & 790,777     & 0.136 & 0.106 & 0.121 & 0.225 & 0.169 & 0.148 & 0.973 & 0.951 & 0.923 \\
    Pokec \cite{takac2012data}                                 & 1,632,803 & 22,301,964  & 0.047 & 0.044 & 0.046 & 0.122 & 0.084 & 0.061 & 0.900 & 0.675 & 0.508 \\
    Orkut \cite{mislove2007socialnetworks}                     & 3,072,441 & 117,185,083 & 0.041 & 0.022 & 0.019 & 0.170 & 0.131 & 0.110 & 0.978 & 0.949 & 0.878 \\
    \midrule
    arxiv-HepPh \cite{leskovec2007graph}                       & 12,008    & 118,505     & 0.659 & 0.749 & 0.788 & 0.698 & 0.586 & 0.519 & 0.876 & 0.723 & 0.567 \\
    arxiv-AstroPh  \cite{leskovec2007graph}                    & 18,772    & 198,050     & 0.318 & 0.326 & 0.359 & 0.677 & 0.609 & 0.561 & 0.932 & 0.839 & 0.740 \\
    congress-committees \cite{porter2005committees}            & 871       & 248,848     & 0.037 & 0.080 & 0.063 & 0.082 & 0.142 & 0.126 & 1.000 & 1.000 & 1.000 \\
    DBLP \cite{yang2015defining}                               & 317,080   & 1,049,866   & 0.306 & 0.634 & 0.821 & 0.732 & 0.613 & 0.517 & 0.864 & 0.675 & 0.489 \\
    \midrule
    email-Enron-core \cite{Klimt-2004-Enron}                   & 148       & 1356        & 0.383 & 0.245 & 0.192 & 0.496 & 0.363 & 0.277 & 0.966 & 0.946 & 0.946 \\
    email-Eu-core \cite{yin2017local,leskovec2007graph}        & 1005      & 16064       & 0.267 & 0.170 & 0.135 & 0.450 & 0.329 & 0.264 & 0.887 & 0.847 & 0.784 \\
    CollegeMsg \cite{panzarasa2009patterns}                    & 1,899     & 41,579      & 0.004 & 0.005 & 0.003 & 0.053 & 0.017 & 0.006 & 0.829 & 0.591 & 0.332 \\
    wiki-Talk \cite{leskovec2010governance}                    & 2,394,385 & 4,659,565   & 0.002 & 0.011 & 0.010 & 0.201 & 0.081 & 0.051 & 0.262 & 0.077 & 0.027 \\
    \midrule
    %as-20000102 \cite{leskovec2005graphs}                     & 6474      & 13234       & 0.010 & 0.011 & 0.015 & 0.340 & 0.156 & 0.135 & 0.645 & 0.223 & 0.065 \\
    oregon2-010526 \cite{leskovec2005graphs}                   & 11,461    & 32,730      & 0.037 & 0.085 & 0.097 & 0.494 & 0.294 & 0.300 & 0.711 & 0.269 & 0.121 \\
    as-caida-20071105 \cite{leskovec2005graphs}                & 26,475    & 53,381      & 0.007 & 0.012 & 0.015 & 0.333 & 0.159 & 0.134 & 0.625 & 0.171 & 0.060 \\
    p2p-Gnutella31 \cite{ripeanu2002mapping,leskovec2007graph} & 62,586    & 147,892     & 0.004 & 0.003 & 0.000 & 0.010 & 0.001 & 0.000 & 0.542 & 0.067 & 0.001 \\
    as-skitter \cite{leskovec2005graphs}                       & 1,696,415 & 11,095,298  & 0.005 & 0.007 & 0.011 & 0.296 & 0.126 & 0.109 & 0.871 & 0.633 & 0.335 \\
    \bottomrule
  \end{tabular}
  \caption{Higher-order clustering coefficients on random graph models, neural
    connections, online social networks, collaboration networks, human
    communication, and technological systems.  Broadly, networks from the same
    domain have similar higher-order clustering characteristics.  Since
    $\welldefnodes_\ell$ is the set of nodes at the center of at least one
    $\ell$-wedge (see Eq.~\ref{eq:def_accf}),
    $\lvert \welldefnodes_\ell \rvert / \lvert V \rvert$
    is the fraction of nodes at the center of at least one
    $\ell$-wedge (the higher-order average clustering coefficient
    $\accf{\ell}$ is only measured over those nodes participating in at least
    one $\ell$-wedge).}
  \label{tab:all_ccfs}
\end{table*}

We compute and analyze the higher-order clustering for networks from a variety
of domains (Table~\ref{tab:all_ccfs}).
We briefly describe the collection of networks and their categorization below:
\begin{enumerate}
\item Two synthetic networks---a random instance of an $\ER$ graph with
  $n=1,000$ nodes and edge probability $p=0.2$ and a small-world network with
  $n=20,000$ nodes, $k = 10$, and rewiring probability $p=0.1$;

\item Four neural networks---the complete neural systems of the nematode worms
  \emph{P.\ pacificus} and \emph{C.\ elegans} as well as the neural connections
  of the Drosophila medulla and mouse retina;

\item Four online social networks---two Facebook friendship networks between
  students at universities from 2005 (fb-Stanford, fb-Cornell) and two complete online
  friendship networks (Pokec and Orkut);

\item Four collaboration networks---two co-authorship networks constructed
  from arxiv submission categories (arxiv-AstroPh and arxiv-HepPh), 
  a co-authorship network constructed from DBLP, and the
  co-committee membership network of United States congresspersons (congress-committees);

\item Four human communication networks---two email networks (email-Enron-core,
  email-Eu-core), a Facebook-like messaging network from a college (CollegeMsg),
  and the edits of user talk pages by other users on Wikipedia (wiki-Talk); and

\item Four technological systems networks---three autonomous systems
  (oregon2-010526, as-caida-20071105, as-skitter) and a peer-to-peer connection
  network (p2p-Genutella31).
\end{enumerate}
In all cases, we take the edges as undirected, even if the original network data
is directed.

Table~\ref{tab:all_ccfs} lists the $\ell$th-order global and average clustering
coefficients for $\ell=2,3,4$ as well as the fraction of nodes that are the
center of at least one $\ell$-wedge (recall that the average clustering
coefficient is the mean only over higher-order local clustering coefficients of nodes
participating in at least one $\ell$-wedge; see \citet{kaiser2008mean} for a
discussion on how this can affect network analyses). We highlight some important
trends in the raw clustering coefficients, and in the next section, we focus on
higher-order clustering compared to what one gets in a null model.

Propositions~\ref{prop:ccf_er} and \ref{prop:ccf_sw} say that we should
expect the higher-order global and average clustering coefficients to
decrease as we increase the order $\ell$ for both the $\ER$ and small-world
models, and indeed $\accf{2} > \accf{3} > \accf{4}$ for these networks.  This
trend also holds for most of the real-world networks (mouse-retina,
congress-committees, and oregon2-010526 are the exceptions). Thus, when
averaging over nodes, higher-order cliques are overall less likely to close in
both the synthetic and real-world networks.

% Clustering table comparing against null models
%!TEX root = paper-ccfs.tex

%output/email-Enron-core-pk-null-ccfs.txt...
%C2: 0.233081 +/- 0.006801
%C3: 0.081648 +/- 0.005345

%output/email-Enron-core-cbar-null-ccfs.txt...
%C2: 0.496440 +/- 0.000000
%C3: 0.354579 +/- 0.005215

\newcommand{\realsigabove}{^{\ast}}
\newcommand{\fakesigabove}{\phantom{^{\ast}}}% for alignment purposes
\newcommand{\realsigbelow}{^{\dagger}}
\newcommand{\fakesigbelow}{\phantom{^{\dagger}}}% for alignment purposes

\begin{table*}[t]
\begin{tabular}{l c c c c c c c c c c c c c c c}
\toprule
& \multicolumn{3}{c}{\emph{C. elegans}} & \multicolumn{3}{c}{fb-Stanford} & \multicolumn{3}{c}{arxiv-AstroPh} & \multicolumn{3}{c}{email-Enron-core} & \multicolumn{3}{c}{oregon2-010526} \\
\cmidrule(lr){2-4}\cmidrule(lr){5-7}\cmidrule(lr){8-10}\cmidrule(lr){11-13}\cmidrule(lr){14-16}
               & original & CM         & MRCN       & original & CM         & MRCN       & original & CM         & MRCN       & original & CM & MRCN & original & CM & MRCN \\
\midrule
    $\accf{2}$ 
    & $0.31$ & $0.15\realsigabove$ & $0.31\fakesigabove$     
    & $0.25$ & $0.03\realsigabove$ & $0.25\fakesigabove$     
    & $0.68$ & $0.01\realsigabove$ & $0.68\fakesigabove$     
    & $0.50$ & $0.23\realsigabove$ & $0.50\fakesigabove$
    & $0.49$ & $0.25\realsigabove$ & $0.49\fakesigabove$
    \\
    $\accf{3}$ 
    & $0.14$ & $0.04\realsigabove$ & $0.17\realsigbelow$ 
    & $0.18$ & $0.00\realsigabove$ & $0.14\realsigabove$ 
    & $0.61$ & $0.00\realsigabove$ & $0.60\fakesigabove$    
    & $0.36$ & $0.08\realsigabove$ & $0.35\fakesigabove$ 
    & $0.29$ & $0.10\realsigabove$ & $0.14\realsigabove$  \\
  \bottomrule
  \end{tabular}
  \caption{Average higher-order clustering coefficients for five networks as
    well as the clustering with respect to two null models: a Configuration
    Model (CM) that samples random graphs with the same degree
    distribution~\cite{bollobas1980probabilistic,milo2003uniform}, and Maximally
    Random Clustered Networks (MRCN) that preserve degree distribution as well
    as $\accf{2}$~\cite{park2004statistical,colomer2013deciphering}.  For the
    random networks, we report the mean over 100 samples. An asterisk ($\ast$)
    denotes when the value in the original network is at least five standard
    deviations above the mean and a dagger ($\dagger$) denotes when the value
    in the original network is at least five standard deviations below the mean.
    Although all networks exhibit clustering with respect to CM,
    only some of the networks exhibit higher-order clustering when controlling
    for $\accf{2}$ with MRCN.}
  \label{tab:data_summary}
\end{table*}

The relationship between the higher-ordrer global clustering coefficient
$\gccf{\ell}$ and the order $\ell$ is less uniform over the datasets. For the three
co-authorship networks (arxiv-HepPh, arxiv-AstroPh, and DBLP) and the three
autonomous systems networks (oregon2-010526, as-caida-20071105, 
and as-skitter), $\gccf{\ell}$
increases with $\ell$, although the base clustering levels are much higher for
co-authorship networks. 
This is not simply due to the presence of cliques---a clique has the same clustering for any order
(Fig.~\ref{fig:ccf_diffs}, left). Instead, these datasets have nodes that serve
as the center of a star and also participate in a clique
(Fig.~\ref{fig:ccf_diffs}, right; see also Proposition~\ref{prop:ccf_bounds}).
On the other hand, $\gccf{\ell}$ decreases with $\ell$ for the two email
networks and the two nematode worm neural networks.  Finally, the change in
$\gccf{\ell}$ need not be monotonic in $\ell$.  In three of the four online
social networks, $\gccf{3} < \gccf{2}$ but $\gccf{4} > \gccf{3}$.

Overall, the trends in the higher-order clustering coefficients can be different
within one of our dataset categories, but tend to be uniform within
sub-categories: the change of $\accf{\ell}$ and $\gccf{\ell}$ with $\ell$
is the same for the two nematode worms within the neural networks,
the two email networks within the communication networks, and the three
co-authorship networks within the collaboration networks. These trends hold even
if the (classical) second-order clustering coefficients differ substantially in
absolute value.

While the raw clustering values are informative, it is also useful to compare
the clustering to what one expects from null models.
We find in the next section that this reveals additional insights into our data.

\subsection{Comparison against null models}

For one real-world network from each dataset category,
we also measure the higher-order clustering
coefficients with respect to two null models (Table~\ref{tab:data_summary}).
First, we compare against the Configuration Model (CM) that samples uniformly from
simple graphs with the same degree
distribution~\cite{bollobas1980probabilistic,milo2003uniform}.
In real-world networks, $\accf{2}$ is much larger than expected with respect to
the CM null model. We find that the same holds for $\accf{3}$.

Second, we use a null model that samples graphs preserving both degree
distribution and $\accf{2}$. Specifically, these are samples from an ensemble
of exponential graphs where the Hamiltonian measures the absolute value of the
difference between the original network and the sampled
network~\cite{park2004statistical}. Such samples are referred to as as
Maximally Random Clustered Networks (MRCN) and are sampled with a simulated
annealing procedure~\cite{colomer2013deciphering}. Comparing $\accf{3}$ between
the real-world and the null network, we observe different behavior in
higher-order clustering across our datasets. Compared to the MRCN null model, \emph{C. elegans} has
significantly less than expected higher-order clustering (in terms of
$\accf{3}$), the Facebook friendship and autonomous system networks have significantly
more than expected higher-order clustering,
and the co-authorship and email networks have
slightly (but not significantly) more than expected higher-order clustering
(Table~\ref{tab:data_summary}). Put another way, all real-world networks
exhibit clustering in the classical sense of triadic closure. However, the
higher-order clustering coefficients reveal that the friendship and autonomous
systems networks exhibit significant clustering beyond what is given by triadic 
closure. These results suggest the need for models that directly account for 
closure in node neighborhoods~\cite{bhat2016densification,lambiotte2016structural}.

Our finding about the lack of higher-order clustering in \emph{C. elegans}
agrees with previous results that 4-cliques are under-expressed, while open
3-wedges related to cooperative information propagation are
over-expressed~\cite{benson2016higher,milo2002network,varshney2011structural}.
This also provides credence for the ``3-layer'' model of
\emph{C. elegans}~\cite{varshney2011structural}. The observed clustering in the
friendship network is consistent with prior work showing the relative
infrequency of open $\ell$-wedges in many Facebook network subgraphs with
respect to a null model accounting for triadic
closure~\cite{ugander2013subgraph}. Co-authorship networks and email networks
are both constructed from ``events'' that create multiple edges---a paper with
$k$ authors induces a $k$-clique in the co-authorship graph and an email sent
from one address to $k$ others induces $k$ edges. This event-driven graph
construction creates enough closure structure so that the average third-order
clustering coefficient is not much larger than random graphs where the classical
second-order clustering coefficient and degree sequence is kept the same.

\begin{figure*}[tb]
\phantomsubfigure{fig:ccfsA}
\phantomsubfigure{fig:ccfsB}
\phantomsubfigure{fig:ccfsC}
\phantomsubfigure{fig:ccfsD}
\phantomsubfigure{fig:ccfsE}
\includegraphics[width=2\columnwidth]{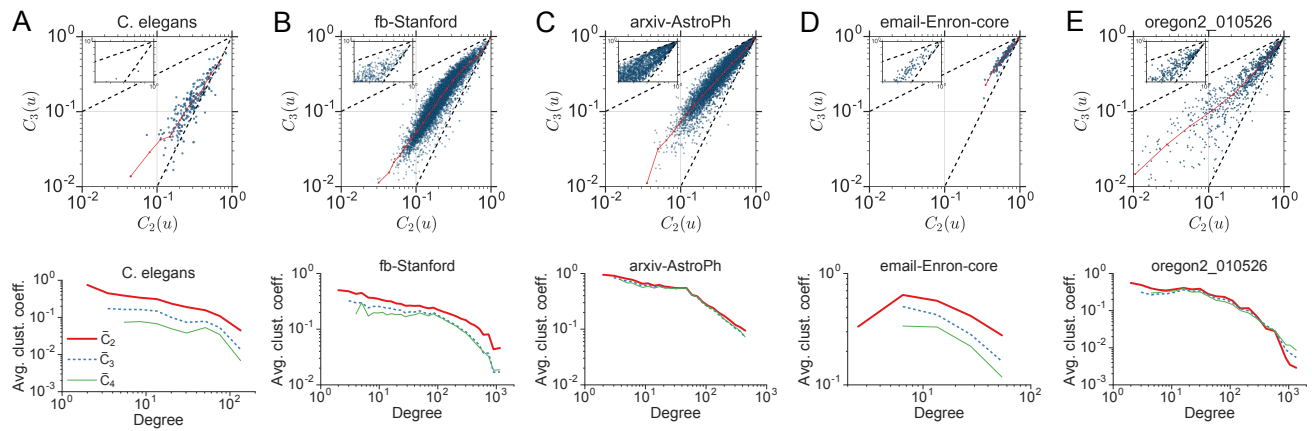}
\caption{Top row: Joint distributions of ($\lccf{2}{u}$,
  $\lccf{3}{u}$) for (A) \emph{C. elegans} (B) Facebook friendship,
  (C) arxiv co-authorship, (D) email, and
  (E) autonomous systems networks.
  Each blue dot represents a node, and the red curve
  tracks the average over logarithmic bins.  The upper trend line is the
  bound in Eq.~\ref{eq:Bound_Kappa3}, and the lower trend line is
  expected Erd\H{o}s-R\'enyi behavior from Proposition~\ref{prop:ccf_er_cond}.  
  Bottom row: Average higher-order clustering coefficients as a function of degree.
  }\label{fig:ccfs}
\end{figure*}

\begin{figure}[tb]
\phantomsubfigure{fig:ccfs_nullA}
\phantomsubfigure{fig:ccfs_nullB}
\includegraphics[width=0.9\columnwidth]{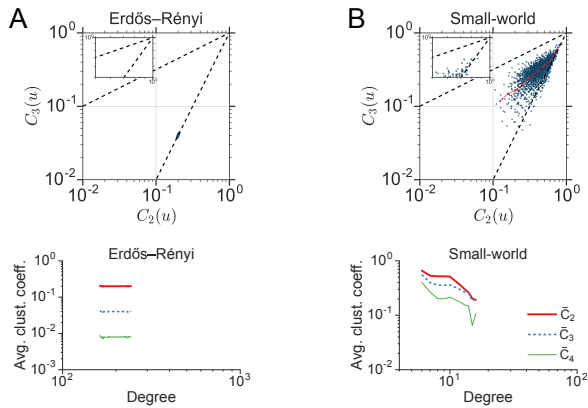}
\caption{Analogous plots of Fig.~\ref{fig:ccfs} for synthetic
   (A) $\ER$ and (B) small-world networks.
  Top row: Joint distributions of ($\lccf{2}{u}$, $\lccf{3}{u}$).  
  Bottom row: Average higher-order clustering coefficients as a function of degree.
  }\label{fig:ccfs_null}
\end{figure}

We emphasize that simple clique counts are not sufficient to obtain these
results. For example, the discrepancy in the third-order average clustering of
\emph{C. elegans} and the MRCN null model is not simply due to the
presence of 4-cliques. The original neural network has nearly twice as many
4-cliques (2,010) than the samples from the MRCN model (mean 1006.2, standard
deviation 73.6), but the third-order clustering coefficient is larger in MRCN.
The reason is that clustering coefficients normalize clique counts
with respect to opportunities for closure.

Thus far, we have analyzed global and average higher-order clustering, 
which both summarize the clustering of the entire network.
In the next section, we look at more localized properties, namely
the distribution of higher-order local clustering coefficients and the higher-order
average clustering coefficient as a function of node degree.

\subsection{Higher-order local clustering coefficients and degree dependencies}

We now examine more localized clustering properties of our networks.
Figure~\ref{fig:ccfs} (top) plots the joint distribution of $\lccf{2}{u}$ and
$\lccf{3}{u}$ for the five networks analyzed in Table~\ref{tab:data_summary},
and Fig.~\ref{fig:ccfs_null} (top) provides the analogous plots for the $\ER$
and small-world networks. In these plots, the lower dashed trend line represents the expected
$\ER$ behavior, i.e., the expected clustering if the edges in the neighborhood
of a node were configured randomly, as formalized in
Proposition~\ref{prop:ccf_er_cond}. The upper dashed trend line is the maximum
possible value of $\lccf{3}{u}$ given $\lccf{2}{u}$, as given by
Proposition~\ref{prop:ccf_bounds}.

For many nodes in \emph{C. elegans}, local clustering is nearly random
(Fig.~\ref{fig:ccfsA}, top), i.e., resembles the $\ER$ joint distribution
(Fig.~\ref{fig:ccfs_nullA}, top). In other words, there are many nodes that lie
on the lower trend line. This provides further evidence that \emph{ C. elegans}
lacks higher-order clustering. In the arxiv co-authorship network, there are
many nodes $u$ with a large value of $\lccf{2}{u}$ that have an even larger
value of $\lccf{3}{u}$ near the upper bound of Eq.~\ref{eq:Bound_Kappa3} (see
the inset of Fig.~\ref{fig:ccfsC}, top). This implies that some nodes appear in
both cliques and also as the center of star-like patterns, as in
Fig.~\ref{fig:ccf_diffs}. On the other hand, only a handful of nodes in the
Facebook friendships, Enron email, and Oregon autonomous systems networks are
close to the upper bound (insets of Figs.~\ref{fig:ccfsB},\ref{fig:ccfsD}, and
\ref{fig:ccfsE}, top).

Figures~\ref{fig:ccfs} and \ref{fig:ccfs_null} (bottom) plot higher-order
average clustering as a function of node degree in the real-world and synthetic
networks. In the $\ER$, small-world, \emph{C. elegans}, and Enron email
networks, there is a distinct gap between the average higher-order clustering
coefficients for nodes of all degrees. Thus, our previous finding that the
average clustering coefficient $\accf{\ell}$ decreases with $\ell$ in these
networks is independent of degree. In the Facebook friendship network,
$\lccf{2}{u}$ is larger than $\lccf{3}{u}$ and $\lccf{4}{u}$ on average for
nodes of all degrees, but $\lccf{3}{u}$ and $\lccf{4}{u}$ are roughly the same
for nodes of all degrees, which means that 4-cliques and 5-cliques close at
roughly the same rate, independent of degree, albeit at a smaller rate than
traditional triadic closure (Fig.~\ref{fig:ccfsB}, bottom). In the co-authorship
network, nodes $u$ have roughly the same $\lccf{\ell}{u}$ for $\ell = 2$, $3$,
$4$, which means that $\ell$-cliques close at about the same rate, independent
of $\ell$ (Fig.~\ref{fig:ccfsC}, bottom). In the Oregon autonomous systems
network, we see that, on average, $\lccf{4}{u} > \lccf{3}{u} > \lccf{2}{u}$ for
nodes with large degree (Fig.~\ref{fig:ccfsE}, bottom). This explains how the
global clustering coefficient increases with the order, but the average
clustering does not, as observed in Table~\ref{tab:all_ccfs}.

%!TEX root = paper-ccfs.tex

\section{Discussion}

We have proposed higher-order clustering coefficients to study
higher-order closure patterns in networks, which generalizes the widely used
clustering coefficient that measures triadic closure.
Our work compliments other recent developments on the importance of higher-order
information in network navigation~\cite{rosvall2014memory,scholtes2017network}
and on temporal community structure~\cite{sekara2016fundamental}; in contrast,
we examine higher-order clique closure and only implicitly consider time as a
motivation for closure.

Prior efforts in generalizing clustering coefficients have focused on shortest
paths~\cite{fronczak2002higher}, cycle formation~\cite{caldarelli2004structure},
and triangle frequency in $k$-hop
neighborhoods~\cite{andrade2006neighborhood,jiang2004topological}.
Such approaches fail to capture closure patterns of cliques, suffer from
challenging computational issues, and are difficult to theoretically analyze in random graph
models more sophisticated than the Erd\H{o}s-R\'enyi model.
On the other hand, our higher-order clustering coefficients are simple but
effective measurements that are analyzable and easily computable (we only rely
clique enumeration, a well-studied algorithmic task).
Furthermore, our methodology provides new insights into the clustering behavior
of several real-world networks and random graph models, and our theoretical
analysis provides intuition for the way in which higher-order clustering
coefficients describe local clustering in graphs.

Finally, we focused on higher-order clustering coefficients as a global network
measurement and as a node-level measurement, and in related work we also
show that large higher-order clustering implies the existence of mesoscale clique-dense
community structure~\cite{yin2017local}.

\begin{acknowledgments}
This research has been supported in part by NSF
IIS-1149837, ARO MURI, DARPA, ONR,
Huawei, 
and Stanford Data Science Initiative.
We thank Will Hamilton and Marinka \v{Z}itnik for
insightful comments.
We thank Mason Porter and Peter Mucha for providing the congress committee membership data.
\end{acknowledgments}

\bibliographystyle{apsrev4-1}
\bibliography{refs}

\end{document}